\let\NAT@parse\undefined
\newcommand{\dif}{\mathop{}\!\mathrm{d}}
\theoremstyle{definition}
\newtheorem{definition}{Definition}
\theoremstyle{definition}
\newtheorem{theorem}{Theorem}
\theoremstyle{remark}
\newtheorem{remark}{Remark}
\DeclareMathOperator{\rank}{rank}
\begin{document}
	\title{Online Observer-Based Inverse Reinforcement Learning}
	\author{Ryan Self, Kevin Coleman, He Bai, and Rushikesh Kamalapurkar\thanks{The authors are with the School of Mechanical and Aerospace Engineering, Oklahoma State University, Stillwater, OK, USA. {\tt\footnotesize \{rself, kevin.coleman10, he.bai, rushikesh.kamalapurkar\}@okstate.edu}. This research was supported, in part, by the National Science Foundation (NSF) under award number 1925147. Any opinions, findings, conclusions, or recommendations detailed in this article are those of the author(s), and do not necessarily reflect the views of the sponsoring agency.}}
	\maketitle
\begin{abstract}
    In this paper, a novel approach to the output-feedback inverse reinforcement learning (IRL) problem is developed by casting the IRL problem, for linear systems with quadratic cost functions, as a state estimation problem. Two observer-based techniques for IRL are developed, including a novel observer method that re-uses previous state estimates via history stacks. Theoretical guarantees for convergence and robustness are established under appropriate excitation conditions. Simulations demonstrate the performance of the developed observers and filters under noisy and noise-free measurements.
\end{abstract}
\section{Introduction}
Inverse Reinforcement Learning (IRL) \cite{SCC.Russell1998,SCC.Ng.Russell2000,SCC.Abbeel.Ng2010}, sometimes referred to as Inverse Optimal Control \cite{SCC.Kalman1964}, is a subfield of Learning from Demonstration (LfD) \cite{SCC.Schaal1997} where the goal is to uncover a reward (or cost) function that explains the observed behavior (i.e., input and output trajectories) of an agent. Early results on IRL assumed that the trajectory of the agent under observation is truly optimal with respect to the unknown reward function \cite{SCC.Ng.Russell2000}. Since optimality is in general a strong assumption in a variety of situations, e.g., human operators and trajectories affected by noise or disturbances, IRL is extended to the case of suboptimal demonstrations (i.e., the case where observed behavior does not necessarily reflect the underlying reward function) \cite{SCC.Ziebart.Maas.ea2008}. While IRL has been an active area of research over the past few decades \cite{SCC.Ratliff.Bagnell.ea2006,SCC.Wulfmeier.Ondruska.ea2015,SCC.Li.Kiseleva.ea2019,SCC.Brown.Goo.ea2019,SCC.Levine.Popovic.ea2010,SCC.Levine.Popovic.ea2011,SCC.Sosic.KhudaBukhsh.ea2017,SCC.Wang.Klabjan2018a,SCC.Michini.How2012}, most IRL techniques are offline and require a large amount of data in order to uncover the true reward function.

Inspired by recent results in online Reinforcement Learning methods \cite{SCC.Vamvoudakis.Lewis2010,SCC.Wang.Liu.ea2016,SCC.Kamalapurkar.Walters.ea2018}, IRL has been extended to online implementations where the objective is to learn from a single demonstration or trajectory \cite{SCC.Molloy.Ford.ea2018,SCC.Kamalapurkar2018,SCC.Self.Harlan.ea2019a,SCC.Self.Mahmud.ea2020}. In \cite{SCC.Kamalapurkar2018,SCC.Self.Harlan.ea2019a}, batch IRL techniques are developed to estimate reward functions in the presence of unmeasureable system states and/or uncertain dynamics for both linear and nonlinear systems. The case where the trajectories being monitored are suboptimal due to an external disturbance is addressed in \cite{SCC.Self.Abudia.ea2020}, and \cite{SCC.Self.Mahmud.ea2020} estimates a feedback policy and generates artificial data using the estimated policy to compensate for the sparsity of data in online implementations. However, results such as \cite{SCC.Molloy.Ford.ea2018,SCC.Kamalapurkar2018,SCC.Self.Harlan.ea2019a,SCC.Self.Abudia.ea2020,SCC.Self.Mahmud.ea2020}, either require full state feedback, or rely on state estimators that require dynamical systems in Brunovsky Canonical form. In addition, none of the aforementioned online IRL methods address uncertainty in the state and control measurements.

This paper builds on the authors' previous work in \cite{SCC.Self.Abudia.ea2020,SCC.Self.Mahmud.ea2020}, where concurrent learning (CL) update laws are utilized to estimate reward functions online using output feedback. However, the dynamical systems in \cite{SCC.Self.Abudia.ea2020,SCC.Self.Mahmud.ea2020} are required to be in Brunovsky canonical form, and as such, only the output feedback case where the state is comprised of the output and its derivatives is addressed. In contrast, the IRL observer (IRL-O) technique in this paper generalizes to any observable linear system, since the developed IRL-Os are in a standard observer form where the state estimates are modified based on the innovation (i.e., the error between the actual and the estimated output). As a result, in the case of noisy measurements, they can be implemented as Kalman filters by using the Kalman gain, instead of the developed Lyapunov-based gain design, to select the observer gain. While stability of the filters in the case where the measurements are noisy is not studied in this paper, simulation results demonstrate that the IRL-Os utilizing both the Lyapunov-based gains and the Kalman filter gain are robust to measurement noise.

This paper details two IRL-O formulations. The first method, called the IRL memoryless observer (MLO), is similar to a standard Luenberger observer with a modified observer gain, and guarantees parameter convergence under a persistence of excitation (PE) condition. The second observer implements a novel idea of re-using previous system state estimates and control measurements, along with the Hamilton-Jacobi-Bellman equation, to gain insights into the quality of the current estimate of the reward function. The key advantage of the IRL history stack observer (HSO) over MLO is that it provides an additional guarantee for boundedness of the estimation errors under \emph{finite} (as opposed to \emph{persistent}) excitation~\cite{SCC.Rotithor.Trombetta.ea2019}. 

In summary, unlike the IRL results in \cite{SCC.Self.Abudia.ea2020} and \cite{SCC.Self.Mahmud.ea2020} that require systems in Brunovsky canonical form, the IRL methods developed in this paper are applicable to general observable linear systems. Contrary to traditional adaptive observers that require PE for stability and convergence, the novel memory-based HSO formulation developed in this paper guarantees boundedness of the weight estimates under loss of excitation. In addition, as evidenced by the simulation results, the technique developed in this paper, when implemented using the Kalman gain, is more robust to measurement noise than current methods such as \cite{SCC.Self.Abudia.ea2020} and \cite{SCC.Self.Mahmud.ea2020}.

\section{Problem formulation}\label{sec:NonLinearProblem}
	Consider an agent under observation with the following linear dynamics
	\begin{equation}
	\dot{x}=Ax+Bu, \ \ \ 	y^\prime = Cx,\label{eq:Nonlinear Problem Formulation}
	\end{equation}
    where $x:\mathbb{R}_{\geq 0}\rightarrow \mathbb{R}^n$ is the state, $u:\mathbb{R}_{\geq 0}\rightarrow \mathbb{R}^m$ is the control, $A\in\mathbb{R}^{n\times n}$ and $B\in\mathbb{R}^{n\times m}$ are constant system matrices, $y^\prime\in\mathbb{R}^{L}$ are the outputs, and $C\in\mathbb{R}^{L\times n}$ denotes the output matrix\footnote{For $a\in\mathbb{R},$ the notation $\mathbb{R}_{\geq a}$ denotes the interval $\left[a,\infty\right)$ and the notation $\mathbb{R}_{>a}$ denotes the interval $\left(a,\infty\right)$.}.
	
	The agent under observation is using the policy which minimizes the following performance index
	\begin{equation}
	J(x_0,u(\cdot)) = \int_{0}^{\infty}\left(x(t)^TQx(t)+u(t)^TRu(t)\right)\dif t,\label{eq:reward}
	\end{equation}
    where $x(\cdot;x_0,u(\cdot))$ is the trajectory of the agent generated by the optimal control signal $u(\cdot)$ starting from the initial condition $x_0$. The objective of this paper is to estimate the unknown matrices $Q$ and $R$ by utilizing input-output pairs.
	
	\begin{remark}
	Since $Q$ and $R$ can be selected to be symmetric without loss of generality, the developed IRL method only estimates the elements of $Q$ and $R$ that are on and above the main diagonal.
	\end{remark}

\section{Inverse Reinforcement Learning}\label{sec:IBE}
    Under the premise that the observed agent makes optimal decisions, the state and control trajectories, $x(\cdot)$ and $u(\cdot)$, satisfy the Hamilton-Jacobi-Bellman (HJB) equation \cite{SCC.Liberzon2012}
	\begin{equation}
	H\left(x\left(t\right),\nabla_{x}\left(V^{*}\left(x\left(t\right)\right)\right)^{T},u\left(t\right)\right)=0,\forall t\in \mathbb{R}_{\geq 0},\label{eq:inverse HJB}
	\end{equation}
	and the optimal control equation
	$
	    u(x(t))=-\frac{1}{2}R^{-1}B^T\nabla_x\left(V^{*}\left(x\left(t\right)\right)\right)^{T},\label{eq:optimal control}
	$

    where $ V^{*}:\mathbb{R}^{n}\to \mathbb{R} $ is the unknown optimal value function and $H:\mathbb{R}^n\times \mathbb{R}^n \times \mathbb{R}^m \to \mathbb{R}$ is the Hamiltonian, defined as $H(x,p,u):=p^T\left(Ax+Bu\right)+x^TQx+u^TRu$. Given a solution $S$ of the Algebraic Riccati Equation, the optimal value function can be calculated as $V^*(x) = x^T S x$.
	
	To aid in the estimation of the reward function, note that $V^*$, $x^TQx$, and $u^TRu$ can be linearly parameterized as $ V^*(x)=\left({W}_{V}^*\right)^{T}\sigma_V(x)$, $ x^TQx = \left({W}_{Q}^*\right)^{T}\sigma_Q(x)$, and $ u^TRu = \left({W}_{R}^*\right)^{T}\sigma_{R1}(u)$, respectively, where  $\sigma_V(x):\mathbb{R}^{n}\to\mathbb{R}^P$, $\sigma_Q(x):\mathbb{R}^{n}\to\mathbb{R}^P$, and $\sigma_{R1}(u):\mathbb{R}^{m}\to\mathbb{R}^M$, are the basis functions, selected as
	\begin{align*}
	    &\sigma_V(x)=\sigma_Q(x):=[x_1^2,2x_1x_2,2x_1x_3,\ldots,2x_1x_n,x_2^2,\nonumber\\& \ \ \ \ \ \ \ \ \ \ \ \ \ \ \ \  2x_2x_3,2x_2x_4,\ldots,x_{n-1}^2,\ldots,2x_{n-1}x_n,x_n^2]^T,\nonumber\\
	    &\sigma_{R1}(u):=[u_1^2,2u_1u_2,2u_1u_3,\ldots,2u_1u_m,u_2^2,\nonumber\\&\ \ \ \ \ \ \  \ \ \ \ \ \  2u_2u_3,2u_2u_4,\ldots,u_{m-1}^2,\ldots,2u_{m-1}u_m,u_m^2]^T,
	\end{align*}
	and 
	$W_V^*\in\mathbb{R}^P$, $W_Q^*\in\mathbb{R}^P$, and $W_R^*\in\mathbb{R}^M$, are the ideal weights, given by
	\begin{align*}
		W_V^*&=\left[S_{11},S_{1}^{(-1)},S_{22},S_2^{(-2)},\ldots,S_{n-1}^{-(n-1)},S_{nn}\right]^T,\\
		W_Q^*&=\left[Q_{11},Q_{1}^{(-1)},Q_{22},Q_2^{(-2)},\ldots,Q_{n-1}^{-(n-1)},Q_{nn}\right]^T,\\
		W_R^*&=\left[R_{11},R_{1}^{(-1)},R_{22},R_2^{(-2)},\ldots,R_{m-1}^{-(m-1)},R_{mm}\right]^T.
	\end{align*}
	where, for a given matrix $E\in\mathbb{R}^{n\times n}$, $E_{ij}$ denotes the corresponding element in the $i-$th row and the $j-$th column of the matrix $E$, and $E_i^{(-j)}$ denotes the $i-$th row of the matrix $E$ with the first $j$ elements removed, i.e., $E_3^{(-3)}:=\left[E_{34},E_{35},\ldots,E_{3(n-1)},E_{3n}\right].$
	
	Using $ \hat{W}_{V} $, $ \hat{W}_{Q} $, and $ \hat{W}_{R} $, which are the estimates of $ W_{V}^{*} $, $ W_{Q}^{*}$, and $ W_{R}^{*}$, respectively, in \eqref{eq:inverse HJB}, the inverse Bellman error (IBE) $ \delta^{\prime}:\mathbb{R}^{n}\times \mathbb{R}^{m}\times\mathbb{R}^{2P+M}\to\mathbb{R} $ is obtained as
	$
	\delta^{\prime}\left({x},u,\hat{W}^\prime\right)=\hat{W}_{V}^{T}\nabla_x\sigma_V(x)
	\left(Ax+Bu\right)
	+\hat{W}_{Q}^{T}\sigma_Q(x)+\hat{W}_R^T\sigma_{R1}(u),\label{eq: IBE}
	$
	where $\hat{W}^\prime:=\begin{bmatrix}\hat{W}_V^T & \hat{W}_Q^T & \hat{W}_R^T\end{bmatrix}^T.$
	
	The vector $Ru$ can be linearly parameterized as $Ru = \sigma_{R2}(u)W_R^*$, where $W_R^*$ is as previously defined in the IBE and $\sigma_{R2}(u):\mathbb{R}^m\to\mathbb{R}^{m\times M}$, where the features $\sigma_{R2}(u)$ can be explicitly calculated as shown in \eqref{eq:u_basis}.
	
	\begin{table*}[ht]
		\begin{equation}
			\sigma_{R 2}(u)=
			\begin{bmatrix}
				u^{T} & 0_{1 \times m-1} & 0_{1 \times m-2} & \ldots & 0 \\
				u_{(1)}e_{2,m} & \left(u^{(-1)}\right)^{T} & 0_{1 \times m-2} & \ldots & 0 \\
				u_{(1)}e_{3,m} & u_{(2)}e_{2,m-1} & \left(u^{(-2)}\right)^{T} & \ldots & 0 \\
				\vdots & \vdots& \vdots & \ddots & \vdots \\
				u_{(1)}e_{m,m} & u_{(2)}e_{m-1,m-1} & u_{(3)}e_{m-2,m-2} & \cdots & \left(u^{-(m-1)}\right)^{T}
			\end{bmatrix}.\label{eq:u_basis}
		\end{equation}
		\caption{Explicit characterization of $ \sigma_{R 2}(u) $. For a given vector $u\in\mathbb{R}^{1\times m}$, $u^{(-j)}$ denotes the vector $u$ with the first $j$ elements removed and $e_{i,j}$ denotes a row vector of size $j$, with a one in the $i-$th position and zeros everywhere else, and $ u_{(i)} $ denotes the $i-$th element of $u$}
	\end{table*}
	
	Using $\hat{W}_R$ and $\hat{W}_V$ for $W_R^*$ and $W_V^*$, respectively, in the optimal controller equation $2Ru=-B^T\nabla_x\left(V^{*}\left(x\right)\right)^{T}$, a control residual error $\Delta^\prime_u:\mathbb{R}^n\times\mathbb{R}^m\times\mathbb{R}^{2P+M}\to\mathbb{R}^m$ is obtained as
	$
	    \Delta^\prime_u(x,u,\hat{W}^\prime)=B^T\left(\nabla_x\sigma_V(x)\right)^T\hat{W}_V+2\sigma_{R2}(u)\hat{W}_R.
	$
	
	Augmenting the control residual error and the inverse Bellman error yields the error equation
	\begin{equation}\label{eq: Initial IRL Matrix}
	    \begin{bmatrix}\delta^\prime\left(x,u,\hat{W}^\prime\right)\\\Delta^\prime_u\left(x,u,\hat{W}^\prime\right)
	    \end{bmatrix}=\begin{bmatrix}
	    \sigma_{\delta^\prime}\left(x,u\right)\\\sigma_{\Delta^\prime_u}\left(x,u\right)\end{bmatrix}\begin{bmatrix}\hat{W}_V\\\hat{W}_Q\\\hat{W}_R\end{bmatrix},
	\end{equation}
	where
	$
	    \sigma_{\delta^\prime}\left(x,u\right)=\Big[\left(Ax+Bu\right)^T\left(\nabla_x\sigma_V(x)\right)^T,
	  \sigma_Q(x)^T,$ $  \sigma_{R1}(u)^T\Big]$, $
	\sigma_{\Delta^\prime_u}\left(x,u\right)=\Big[B^T\!\!\left(\nabla_x\sigma_V(x)\right)^T\!\!\!,0_{m\times P},2\sigma_{R2}(u)\Big].
	$
	
	The IRL problem is then formulated as the need to estimate $\hat{W}_V,\hat{W}_Q,$ and $\hat{W}_R$ by minimizing $\delta^\prime$ and $\Delta^\prime_u.$ However, the IRL problem, as formulated above, is ill-posed, because the minimization problem $\min_{\hat{W}^{\prime}} \left|\delta'\right| + \left\Vert\Delta_u'\right\Vert $ admits an infinite number of solutions, including the trivial solution $\hat{W}_V=\hat{W}_Q=\hat{W}_R=0$ and the scaled solutions $\hat{W}_V=\alpha W_V^*, \hat{W}_Q=\alpha W_Q^*,$ and $ \hat{W}_R=\alpha W_R^*$ $ \forall \alpha\in\mathbb{R}_{>0}.$ To address the scaling ambiguity and to remove the trivial solution, a single reward weight will be assumed to be known. Since the optimal solution corresponding to a cost function is invariant with respect to arbitrary scaling of the cost function, establishing the scale by assuming that one of the weights as known is without loss of generality. Selecting $r_1$ as the known weight and removing it from \eqref{eq: Initial IRL Matrix} yields
	\begin{equation}
	    \begin{bmatrix}\delta\left(x,u,\hat{W}\right)\\\Delta_u\left(x,u,\hat{W}\right)
	    \end{bmatrix}=\begin{bmatrix}
	    \sigma_\delta\left(x,u\right)\\\sigma_{\Delta_u}\left(x,u\right)\end{bmatrix}\begin{bmatrix}\hat{W}_V\\\hat{W}_Q\\\hat{W}_R^-\end{bmatrix}+\begin{bmatrix}u_1^2r_1\\2u_1r_1\\0_{m-1\times 1}\end{bmatrix},\label{eq: Error Matrix}
	\end{equation}
	where $\hat{W}_R^-$ denotes $\hat{W}_R$ with the first element removed, $\hat{W}:=\begin{bmatrix}\hat{W}_V^T & \hat{W}_Q^T & \left(\hat{W}_R^-\right)^{T}\end{bmatrix}^T$, 
	$
	    \sigma_{\delta}\left(x,u\right)=\big[\left(Ax+Bu\right)^T\left(\nabla_x\sigma_V(x)\right)^T, \sigma_Q(x)^T, \left(\sigma_{R1}^-(u)\right)^T\big], $ and $
	\sigma_{\Delta_u}\left(x,u\right)=\begin{bmatrix}B^T\left(\nabla_x\sigma_V(x)\right)^T&0_{m\times n}&2\sigma_{R2}^-(u)\end{bmatrix},
	$
	where $\left(\sigma_{R1}^-(u)\right)^T$ and $\sigma_{R2}^-(u)$ denote $\sigma_{R1}^T(u)$ and $\sigma_{R2}(u)$ with the first columns removed.
	
    We can formulate the IRL problem as a state estimation problem by utilizing the IBE and the controller equation in an observer framework. Such a formulation allows us to address general output feedback linear systems and to leverage the use of Kalman gains under noisy conditions.
    
    To cast the IRL problem in a state estimation form, the ideal weights are concatenated with the system state to yield the concatenated state vector $z = \begin{bmatrix}x^T & \left(W^*\right)^T\end{bmatrix}^T$, where $W^*:=\begin{bmatrix}\left(W_V^*\right)^T & \left(W_Q^*\right)^T & \left(\left({W}_R^*\right)^-\right)^T\end{bmatrix}^T$. Since the ideal weights are constant, the dynamics of the concatenated state is expressed as
    $
        \dot{z} = \begin{bmatrix}Ax+Bu\\0_{2P+M-1\times 1}\end{bmatrix} ,$ and $
        y = h(z),
    $ where $y$ denotes the measurement vector and $h(z)$ is the corresponding measurement model to be designed in the following.
    
    \section{A memoryless observer (MLO)}\label{sec: PE}
    The key idea behind MLO is to treat the measurements, $y^{\prime}$, and the measured/known quantities in \eqref{eq: Error Matrix} as the \emph{output}, $y\in\mathbb{R}^{L+1+m}$, used for estimation of the concatenated state. The output is thus given by
    $ y = \begin{bmatrix}
        \left(y^{\prime}\right)^T&
      -u_1^2r_1&
      -2u_1r_1&
      0_{1\times m-1}
        \end{bmatrix}^T.$
    The corresponding measurement model is developed by using \eqref{eq: Error Matrix} to express the output as a function of the concatenated state as
    $
        h(z) = \begin{bmatrix}
        Cx\\
        \begin{bmatrix}\sigma_\delta\left(x,u\right)\\\sigma_{\Delta_u}\left(x,u\right)\end{bmatrix}\begin{bmatrix}W_V^*\\W_Q^*\\\left(W_R^*\right)^-\end{bmatrix}\end{bmatrix}.
    $
    Let $g(\hat{x},u):=\begin{bmatrix}
	    \sigma_\delta\left(\hat{x},u\right)\\\sigma_{\Delta_u}\left(\hat{x},u\right)\end{bmatrix}$ and $\sigma_u(u_1):=\begin{bmatrix}-u_1^2r_1\\
  -2u_1r_1\\
  0_{m-1\times 1}\end{bmatrix}$. The observer can then be designed as
    \begin{equation}\label{eq: Final PE Observer}
        \begin{bmatrix}\dot{\hat{x}}\\\dot{\hat{W}}\end{bmatrix}=\begin{bmatrix}A\hat{x}+Bu\\0_{2P+M-1\times 1}\end{bmatrix}+K\left(\begin{bmatrix}Cx\\\sigma_u(u_1)\end{bmatrix}-\begin{bmatrix}C\hat{x}\\g(\hat{x},u)\hat{W}\end{bmatrix}\right),
    \end{equation}
    where $K\in\mathbb{R}^{n+2P+M-1\times L+m+1}$ is the observer gain matrix, designed in the following section. 
    \subsection{Observer Gain Design and Stability Analysis}
    In the following analysis, the gain matrix $K$ will be designed in a block diagonal form. In particular, we choose $K_{MLO}:=\begin{bmatrix}K_1 & 0_{n\times 1+m}\\0_{2P+M-1\times L} & \gamma g(\hat{x},u)^TK_2\end{bmatrix}$ and $\gamma:=\nicefrac{1}{\left(\nu\Vert g(\hat{x},u)^Tg(\hat{x},u)\Vert+1\right)}$ where $\nu\in\mathbb{R}_{\geq 0}$ is a tunable constant.
    
    The following theorem analyzes the stability properties of the resulting MLO using persistence of excitation.
    \begin{definition}\label{def: PE}
	\cite{SCC.Sastry.Bodson1989} A bounded signal $t\mapsto A(t)$ is called persistently excited, if for all $t\geq 0$ there exists $\alpha_1,\alpha_2,\delta\in\mathbb{R}_{>0}$ such that\footnote{The notation $I$ denotes an identity matrix.} $
           \alpha_2 I\geq\int_{t}^{t+\delta}A^T(\tau)A(\tau)\dif\tau\geq\alpha_1 I.$
    \end{definition}

    \begin{theorem}\label{Thm: State Exponential}Provided the gain $K_1$ is selected such that $\left(A-K_1C\right)$ is Hurwitz, the gain $K_2$ is selected to be a symmetric positive definite matrix, and $ g(\hat{x},u)$ is PE, then $ \lim_{t\to\infty}\tilde{W}(t) = 0.$
    \end{theorem}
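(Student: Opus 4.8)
The plan is to cast \eqref{eq: Final PE Observer} as a cascade in the estimation errors $\tilde{x} := x - \hat{x}$ and $\tilde{W} := W^{*} - \hat{W}$. Subtracting the plant dynamics from the first block of \eqref{eq: Final PE Observer} and using that the upper-right block of $K_{MLO}$ is zero gives the autonomous subsystem $\dot{\tilde{x}} = (A - K_{1}C)\tilde{x}$, which is uniformly exponentially stable because $K_{1}$ is chosen so that $A - K_{1}C$ is Hurwitz. For the weight error, the governing identity is that the true trajectory and the ideal weights make the left-hand side of \eqref{eq: Error Matrix} vanish, i.e.\ $\delta(x,u,W^{*}) = 0$ and $\Delta_{u}(x,u,W^{*}) = 0$, which is equivalent to $g(x,u)W^{*} = \sigma_{u}(u_{1})$. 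Substituting this into the innovation term of the second block of \eqref{eq: Final PE Observer} and writing $\hat{W} = W^{*} - \tilde{W}$ yields
\begin{equation*}
\dot{\tilde{W}} = -\gamma\, g(\hat{x},u)^{T} K_{2}\, g(\hat{x},u)\,\tilde{W} \;-\; \gamma\, g(\hat{x},u)^{T} K_{2}\bigl(g(x,u) - g(\hat{x},u)\bigr)W^{*},
\end{equation*}
so the weight error is governed by a PE-driven homogeneous part plus a perturbation induced by $\tilde{x}$.

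Next I would nail down the signal bounds. Optimality of the observed agent with a finite cost \eqref{eq:reward} (and the usual detectability of $(A,Q^{1/2})$) forces $x(\cdot)$, and hence $u(\cdot)$ as its stabilizing feedback, to be bounded; together with $\tilde{x} \to 0$ this makes $\hat{x}$ bounded, so $g(\hat{x},u)$ is bounded and consequently $\gamma$ is bounded below by some $\gamma_{\min} > 0$. Since the entries of $g$ are polynomials of degree at most two in the state, $g(\cdot,u)$ is Lipschitz on the bounded region containing the trajectory, so $\|g(x,u) - g(\hat{x},u)\| \le \ell\|\tilde{x}\|$ for a constant $\ell$; hence the perturbation term above decays exponentially. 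For the homogeneous part I would factor $\gamma\, g(\hat{x},u)^{T} K_{2}\, g(\hat{x},u) = \Psi^{T}\Psi$ with $\Psi := \gamma^{1/2} K_{2}^{1/2} g(\hat{x},u)$, which is legitimate because $K_{2}$ is symmetric positive definite; since $g(\hat{x},u)$ is bounded and PE and $\gamma \ge \gamma_{\min}$, the signal $\Psi$ is bounded and satisfies Definition~\ref{def: PE}. The unperturbed system $\dot{\tilde{W}} = -\Psi^{T}\Psi\,\tilde{W}$ is then uniformly exponentially stable by the classical PE argument: with $V = \tilde{W}^{T}\tilde{W}$ one has $\dot{V} = -2\|\Psi^{T}\tilde{W}\|^{2} \le 0$, and the PE lower bound, combined with a bound on how much $\tilde{W}$ can move over one excitation window, forces $V$ to contract by a fixed fraction over each window.

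Finally, a uniformly exponentially stable linear time-varying system driven by an additive input that decays to zero has a state converging to zero (the converging-input/converging-state property, equivalently input-to-state stability of the homogeneous part applied to a vanishing disturbance), which yields $\lim_{t\to\infty}\tilde{W}(t) = 0$. I expect the main obstacle to be the homogeneous PE step: converting the integral inequality of Definition~\ref{def: PE} into a uniform exponential decay rate for $\dot{\tilde{W}} = -\Psi^{T}\Psi\tilde{W}$, and in particular verifying that the normalization gain $\gamma$ does not destroy persistency of excitation --- which is precisely where boundedness of $g$ (the \emph{upper} bound in the PE definition) is used, to keep $\gamma$ away from zero. A secondary point that needs care is the boundedness of $x$, $u$, $\hat{x}$, and hence of $g(\hat{x},u)$ and the Lipschitz constant $\ell$, so that the cascade and the vanishing-perturbation argument are rigorous rather than merely formal.
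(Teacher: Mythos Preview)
Your proposal is correct and follows essentially the same route as the paper: derive the cascade $\dot{\tilde{x}}=(A-K_1C)\tilde{x}$, use the identity $g(x,u)W^{*}=\sigma_u(u_1)$ to write $\dot{\tilde{W}}$ as a PE-driven linear time-varying system plus a perturbation that vanishes with $\tilde{x}$, invoke the standard PE result (the paper cites Sastry--Bodson Theorem~2.5.1 where you sketch the $\Psi^{T}\Psi$ contraction argument directly) for GES of the homogeneous part, and finish with ISS plus converging input (Khalil Lemma~4.6 and Exercise~4.58 in the paper). The only cosmetic difference is that the paper simply asserts $x,\hat{x},u\in\mathcal{L}_\infty$ rather than deriving boundedness from optimality as you do.
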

	\begin{proof}
		The dynamics for the system state estimation errors can be described by $\dot{\tilde{x}}=Ax+Bu-A\hat{x}-Bu-K_1C\tilde{x}=\dot{\tilde{x}}= (A-K_1C) \tilde{x}.$
		If $A-K_1C$ is Hurwitz, then $\tilde{x}$ converges exponentially to the origin.
		
		The dynamics of the weight estimation error can be expressed as $            \dot{\tilde{W}}=-\gamma g(\hat{x},u)^TK_2\sigma_u(u_1)+\gamma g(\hat{x},u)^TK_2g(\hat{x},u)\hat{W}. $
		Adding $\pm \gamma g(\hat{x},u)^TK_2g(\hat{x},u){W}^*$ and using the fact that $\sigma_u(u_1)=g(x,u)W^*$, the weight estimation error dynamics can be expressed as a perturbed linear time-varying system
		\begin{equation}\label{eq: Perturbed sys}
			\dot{\tilde{W}}=-A(t)\tilde{W}+B(t),
		\end{equation}
		where $A(t):=\gamma(t) g(\hat{x}(t),u(t))^TK_2g(\hat{x}(t),u(t))$ and $B(t):=\gamma(t) g(\hat{x}(t),u(t))^TK_2(g(\hat{x}(t),u(t)) - g(x(t),u(t)))W^*$.
		Since $\hat{x},x,u\in\mathcal{L}_\infty,$ Theorem 2.5.1 from \cite{SCC.Sastry.Bodson1989} implies that the nominal system $\dot{\tilde{W}}=-A(t)\tilde{W}$ is globally exponentially stable (GES) if $K_2$ is a symmetric positive definite matrix and the signal $(\hat{x},u)$ is PE.
		
	   Lemma 4.6 from \cite{SCC.Khalil2002}, can then be invoked with $B$ as the input and $\tilde{W}$ as the state to conclude that $\dot{\tilde{W}}=-A(t)\tilde{W}+B(t).$
	\end{proof}

\section{Inclusion of memory (HSO)}\label{sec: Hist}
The observer designed in the previous section relies on \textit{persistent} excitation for stability and convergence. As a result, it suffers from the well-known lack of robustness of PE-based adaptive control methods under loss of excitation. This section develops an observer (called the HSO) that relies on re-use of previously recorded data (henceforth referred to as the history stack) for robustness. If the system trajectories are PE, then the HSO results in convergence of the estimation errors to the origin, similar to the MLO. However, as opposed to the MLO, through the use of a history stack, the HSO guarantees boundedness of the state estimation errors even under loss of excitation.

The output for the HSO is
$
    y(t) = \Big[\!
    \left(y^{\prime}(t)\right)\!^T\!,
   -u_1^2(t_1)r_1,
  -2u_1(t_1)r_1,
  0_{1\times m-1},
  \hdots, 
  -u_1^2(t_N)r_1, $
  $-2u_1(t_N)r_1, 
  0_{1\times m-1}
    \Big]^T,
$ with the corresponding measurement model, obtained by using past control values and past state estimates in \eqref{eq: Error Matrix}, given by
\begin{equation*}
    \hspace*{-1cm}h({z}) = \begin{bmatrix}
    C{x}\\
    \begin{bmatrix}
	    \sigma_\delta\left({x}(t_1),u(t_1)\right)\\\sigma_{\Delta_u}\left({x}(t_1),u(t_1)\right)\\\vdots\\\sigma_\delta\left({x}(t_N),u(t_N)\right)\\\sigma_{\Delta_u}\left({x}(t_N),u(t_N)\right)\end{bmatrix}\begin{bmatrix}W_V^*\\W_Q^*\\\left(W_R^*\right)^-\end{bmatrix}\end{bmatrix},
    \end{equation*}
    where $\sigma_\delta\left({x}(t_i),u(t_i)\right)$ and $\sigma_{\Delta_u}\left({x}(t_i),u(t_i)\right)$ denotes $\sigma_\delta\left({x}(t),u(t)\right)$ and $\sigma_{\Delta_u}\left({x}(t),u(t)\right)$ evaluated at time $t_i$, respectively.
    
    It is assumed that at every time instance $t$, the observer has access to a history stack $\mathcal{H} \coloneqq \left\{\hat{\Sigma},\Sigma_u\right\}$, defined as
    \[
        \hat{\Sigma}:=\begin{bmatrix}
        \sigma_\delta\left(\hat{x}(t_1),u(t_1)\right)\\\sigma_{\Delta_u}\left(\hat{x}(t_1),u(t_1)\right)\\\vdots\\\sigma_\delta\left(\hat{x}(t_N),u(t_N)\right)\\\sigma_{\Delta_u}\left(\hat{x}(t_N),u(t_N)\right)\end{bmatrix}, \ \ \ \Sigma_u:=\begin{bmatrix}-u_1^2(t_1)r_1\\
        -2u_1(t_1)r_1\\
        0_{m-1\times 1}\\
        \vdots\\
        -u_1^2(t_N)r_1\\
        -2u_1(t_N)r_1\\
        0_{m-1\times 1}\end{bmatrix},
    \]
    where time instances $t_1,\ldots,t_N$ are selected to ensure that the resulting history stack is full rank, as subsequently defined in Def. 2.
    Denoting the observer gain matrix by $K\in\mathbb{R}^{n+2P+M-1\times L+N(1+m) }$, the HSO is designed as
    \begin{equation}\label{eq: Final observer}
        \begin{bmatrix}\dot{\hat{x}}\\\dot{\hat{W}}\end{bmatrix}=\begin{bmatrix}A\hat{x}+Bu\\0_{2P+M-1}\end{bmatrix}+K\left(\begin{bmatrix}Cx\\\Sigma_u\end{bmatrix}-\begin{bmatrix}C\hat{x}\\\hat{\Sigma}\hat{W}\end{bmatrix}\right).
    \end{equation}

    The error in equation \eqref{eq: Initial IRL Matrix} implies that the innovation $\Sigma_u - \hat{\Sigma}\hat{W}$ in \eqref{eq: Final observer} corresponds to the weight estimation error $\tilde{W}$ only if $\hat{\Sigma} = \Sigma$. Since $\hat{\Sigma}$ depends continuously on $\hat{x}$ and because $\hat{x}$ exponentially converges to $x$, $\hat{\Sigma}$ exponentially converges to $\Sigma$. As a result, newer and better estimates of $x$ can be leveraged to improve the estimates of $W^*$ by purging and refreshing the history stack $\mathcal{H}$. Due to purging, the time instances $\left\{t_1,\cdots t_N\right\}$ and the matrices $\hat{\Sigma}$ and $\Sigma_u$ are piecewise constant functions of time. 
    
    \begin{definition}\label{def: FUll Rank}
    The history stack is called \emph{full rank} if $\rank\left(\hat{\Sigma}\right) = 2P+M-1$. The signal $\left(\hat{x},u\right)$ is called \emph{finitely informative} (FI) if there exist time instances $0\leq t_1 < t_2 < \cdots < t_N$ such that the resulting history stack is full rank and \emph{persistently informative} (PI) if for any $T\geq 0$, there exist time instances $T\leq t_1 < t_2 < \cdots < t_N$ such that the resulting history stack is full rank.
    \end{definition}
    A history stack management algorithm similar to \cite[Fig. 1]{SCC.Kamalapurkar2018} is used to ensure the existence of a time instance $t_M$ such that, if the signal $(\hat{x},u)$ is FI, then the history stack is full rank for all $t\geq t_M$, and in addition, if it is PI, then $\lim_{t\to \infty}\left\Vert \Sigma(t) - \hat{\Sigma}(t)\right\Vert = 0$.
    
    \subsection{Observer Gain Design and Stability Analysis}
 
    The HSO gain matrix is designed in the block diagonal form
    $ K_{HSO}\coloneqq\begin{bmatrix}K_3 & 0_{n\times N+Nm}\\0_{2P+M-1\times L} & K_4\left(\hat{\Sigma}^T\hat{\Sigma}\right)^{-1}\hat{\Sigma}^T\end{bmatrix},\label{eq: final HS gain}$
    where $K_3\in\mathbb{R}^{n\times L}$ is a constant gain matrix and $K_4:\mathbb{R}_{\geq 0}\to \mathbb{R}^{2P+M-1\times 2P+M-1}$ is a potentially time-varying gain matrix. Provided the gain matrices are selected to satisfy the hypothesis of Theorem~\ref{thm:2} below, the resulting observer in \eqref{eq: Final observer} can be shown to be convergent in the presence of PE and bounded under loss of excitation. Finite excitation is needed for the history stack to be full rank so that $(\hat{\Sigma}^T\hat{\Sigma})^{-1}$ is well-defined.
    
    \begin{theorem}\label{thm:2}
    Provided $K_3$ is selected such that $\left(A-K_3C\right)$ is Hurwitz, $K_4(t)$ is selected such that for $t<t_M$, $K_4(t) = 0$ and for $t\geq t_M$, $K_4(t)$ is symmetric positive definite, $0 < \underline{k} \leq \inf_{t\geq t_M} \left\{\lambda_{\min}K_4(t)\right\} $ and $\sup_{t\geq t_M} \left\{ \left\Vert K_4(t)\right\Vert\right\}  \leq \overline{k}<\infty$, then $ \tilde{W} $ is ultimately bounded (UB) if the signal $\left(\hat{x},u\right)$ is FI and $\lim_{t\to\infty}\tilde{W}(t) = 0$ if it is PI.
    \end{theorem}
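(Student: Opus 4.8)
The plan is to mirror the proof of Theorem~\ref{Thm: State Exponential}, decomposing the closed-loop error into a state-estimation error that decays on its own and a weight-estimation error governed by a perturbed, exponentially stable, linear time-varying system. Subtracting the $\hat{x}$-block of \eqref{eq: Final observer} from \eqref{eq:Nonlinear Problem Formulation} and using the block-diagonal form of $K_{HSO}$ gives $\dot{\tilde{x}}=(A-K_3C)\tilde{x}$, so $\tilde{x}$ is bounded and converges exponentially to the origin. Since $x,u\in\mathcal{L}_\infty$, it follows that $\hat{x}=x-\tilde{x}\in\mathcal{L}_\infty$, and because the entries of $\hat{\Sigma}$ and $\Sigma$ are polynomials in $(\hat{x},u)$ and $(x,u)$, respectively, $\hat{\Sigma},\Sigma\in\mathcal{L}_\infty$. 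Moreover, under the FI (hence also the PI) hypothesis the history-stack management algorithm produces a finite $t_M$ after which the stack is full rank (Definition~\ref{def: FUll Rank}), so $(\hat{\Sigma}^T\hat{\Sigma})^{-1}$ is well-defined for $t\geq t_M$.

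Next I would derive the weight-error dynamics. From the $\hat{W}$-block of \eqref{eq: Final observer}, $\dot{\hat{W}}=K_4(t)(\hat{\Sigma}^T\hat{\Sigma})^{-1}\hat{\Sigma}^T(\Sigma_u-\hat{\Sigma}\hat{W})$. Evaluating \eqref{eq: Error Matrix} at the true state and the ideal weights annihilates the inverse Bellman error and the control residual, which gives $\Sigma_u=\Sigma W^*$; substituting this, adding and subtracting $\hat{\Sigma}W^*$, and cancelling $(\hat{\Sigma}^T\hat{\Sigma})^{-1}\hat{\Sigma}^T\hat{\Sigma}=I$ yields, for $t\geq t_M$,
\[
\dot{\tilde{W}}=-K_4(t)\tilde{W}+\epsilon(t),\qquad \epsilon(t):=\pm K_4(t)(\hat{\Sigma}^T\hat{\Sigma})^{-1}\hat{\Sigma}^T\big(\Sigma(t)-\hat{\Sigma}(t)\big)W^*,
\]
the sign being immaterial below, while $\dot{\tilde{W}}=0$ for $t<t_M$ since $K_4\equiv0$ there, so $\tilde{W}(t_M)=\tilde{W}(0)$ is finite. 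Taking $V=\tilde{W}^T\tilde{W}$ and using $\lambda_{\min}(K_4(t))\geq\underline{k}$ and $\|K_4(t)\|\leq\overline{k}$ for $t\geq t_M$, the nominal system $\dot{\tilde{W}}=-K_4(t)\tilde{W}$ satisfies $\dot{V}\leq-2\underline{k}V$ and is globally exponentially stable on $[t_M,\infty)$.

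It then remains to bound the perturbation. Since $(\hat{\Sigma}^T\hat{\Sigma})^{-1}\hat{\Sigma}^T$ is the pseudoinverse of $\hat{\Sigma}$, its norm equals $1/\sigma_{\min}(\hat{\Sigma})$; by the properties of the history-stack management algorithm (cf.~\cite[Fig.~1]{SCC.Kamalapurkar2018}) there is $\underline{\sigma}>0$ with $\sigma_{\min}(\hat{\Sigma}(t))\geq\underline{\sigma}$ for all $t\geq t_M$, so $\|\epsilon(t)\|\leq(\overline{k}/\underline{\sigma})\|W^*\|\,\|\Sigma(t)-\hat{\Sigma}(t)\|$, which is bounded because $\hat{\Sigma},\Sigma\in\mathcal{L}_\infty$. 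Lemma~4.6 of \cite{SCC.Khalil2002} then shows that $\dot{\tilde{W}}=-K_4(t)\tilde{W}+\epsilon(t)$ is input-to-state stable with respect to $\epsilon$ on $[t_M,\infty)$; together with the finiteness of $\tilde{W}(t_M)$ this makes $\tilde{W}$ bounded on $\mathbb{R}_{\geq0}$ with $\limsup_{t\to\infty}\|\tilde{W}(t)\|\leq\gamma\big(\sup_{t\geq t_M}\|\epsilon(t)\|\big)$ for a class-$\mathcal{K}$ function $\gamma$, i.e., $\tilde{W}$ is UB when $(\hat{x},u)$ is FI. When $(\hat{x},u)$ is PI, the management algorithm additionally guarantees $\|\Sigma(t)-\hat{\Sigma}(t)\|\to0$, hence $\epsilon(t)\to0$, and Exercise~4.58 of \cite{SCC.Khalil2002} gives $\lim_{t\to\infty}\tilde{W}(t)=0$.

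The step I expect to be the main obstacle is the uniform positive lower bound $\underline{\sigma}$ on $\sigma_{\min}(\hat{\Sigma}(t))$ used above: being full rank at each individual $t\geq t_M$ only furnishes pointwise positivity, whereas under PI the stack is purged infinitely often, so a genuinely uniform bound has to be extracted from the specifics of the management algorithm --- for instance, from the rule that a candidate sample is admitted only when it does not decrease $\lambda_{\min}(\hat{\Sigma}^T\hat{\Sigma})$ below its value at $t_M$ (or below a preset threshold) --- in combination with the boundedness of $(\hat{x},u)$. A secondary technicality is that $\hat{\Sigma}$ and $K_4$ are only piecewise continuous in $t$, so the ISS estimate must be invoked in the form valid for measurable, essentially bounded perturbations.
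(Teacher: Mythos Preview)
Your proof is correct and follows essentially the same route as the paper: decompose into the exponentially stable $\tilde{x}$-dynamics and the perturbed linear time-varying $\tilde{W}$-dynamics $\dot{\tilde{W}}=-K_4(t)\tilde{W}+K_4(t)w$ with $w=(I-(\hat{\Sigma}^T\hat{\Sigma})^{-1}\hat{\Sigma}^T\Sigma)W^*$ (your $\epsilon$ equals $-K_4 w$), use the quadratic Lyapunov function $V=\tfrac12\tilde{W}^T\tilde{W}$ to obtain ISS, and conclude UB under FI and convergence under PI via Exercise~4.58 of \cite{SCC.Khalil2002}. The only cosmetic differences are that the paper invokes Theorem~4.19 of \cite{SCC.Khalil2002} directly on the Lyapunov inequality $\dot V\le -\underline{k}\|\tilde{W}\|^2+\overline{k}\|\tilde{W}\|\|w\|$ rather than Lemma~4.6, and it does not isolate the uniform lower bound on $\sigma_{\min}(\hat{\Sigma})$ --- a point you rightly flag as the genuine technical crux, since it is implicitly needed in the paper's argument too (for $w\to 0$ under PI and for $w$ bounded under FI).
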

    \begin{proof}
    Using Theorem \ref{Thm: State Exponential}, if $\left(A-K_3C\right)$ is Hurwitz, $\tilde{x}(t)\to 0$ exponentially as $t\to\infty.$
    Using \eqref{eq: Final observer}, the dynamics of the weight estimation error can be expressed as $ \dot{\tilde{W}} = K_4(t)\hat{W}-K_4(t)\left(\hat{\Sigma}^T(t)\hat{\Sigma}(t)\right)^{-1}\hat{\Sigma}^T(t)\Sigma_u (t)$. Since $K_4$ is set to $0$, the weight estimates are constant over $[0,t_M)$. For $t\geq t_M$, adding $\pm K_4(t)W^*$ to $\dot{\tilde{W}}$, and using the fact that $\Sigma W^*=\Sigma_u$, the weight estimation error dynamics can be treated as the controlled system
    \begin{equation}
        \dot{\tilde{W}} = -K_4(t)\tilde{W}+K_4(t)w,\label{eq: Closed Loop Error Dynamics}
    \end{equation}
     where $w(t) \coloneqq \left(I-\left(\hat{\Sigma}^T(t)\hat{\Sigma}(t)\right)^{-1}\hat{\Sigma}^T(t)\Sigma(t)\right)W^*$ is treated as the control input. Using the Cauchy-Schwartz Inequality and the Rayleigh-Ritz Theorem \cite{SCC.Horn.Johnson1993}, the orbital derivative of the positive definite candidate Lyapunov function $ V\left(\tilde{W}\right)\coloneqq\frac{1}{2}\tilde{W}^T\tilde{W}$ along the trajectories of \eqref{eq: Closed Loop Error Dynamics} can be bounded as
    \begin{equation}
        \dot{V}\left(t,\tilde{W}\right)\leq-\underline{k}\left\Vert\tilde{W}\right\Vert^2+\overline{k}\left\Vert\tilde{W}\right\Vert\left\Vert w\right\Vert,
	\end{equation}
	for all $t\geq t_M$ and $\tilde{W}\in\mathbb{R}^{2P+M-1}$.
    In the domain $\Vert \tilde{W}\Vert>\frac{2\overline{k} \Vert W^*\Vert}{\underline{k}}\left\Vert w \right\Vert$, the orbital derivative satisfies the bound $ \dot{V}\left(t,\tilde{W}\right) \leq -\frac{\underline{k}}{2}\Vert\tilde{W}\Vert^2$.
    Using Theorem 4.19 from \cite{SCC.Khalil2002}, it can be concluded that the controlled system in \eqref{eq: Closed Loop Error Dynamics} is ISS. 
    
    If the signal $(\hat{x},u)$ is PI, then the history stack can be purged and refreshed infinitely many times such that $w(t)\to 0$ as $t\to\infty$. Utilizing Exercise 4.58 from \cite{SCC.Khalil2002}, it can then be concluded that $\tilde{W}(t) \to 0$ as $t\to\infty$.
    
    If the signal $(\hat{x},u)$ is FI but not PI, then there exists a time instance $T$ such that the history stack remains unchanged for all $t\geq T$. As a result, there exists a constant $\overline{w}$ such that for all $t\geq T$, $\left\Vert w(t) \right\Vert \leq \overline{w}$. By the definition of ISS, it can then be concluded that $\tilde{W}$ is UB.
    \end{proof}
    
    \begin{remark}
    The UB result in the absence of PE is a distinct advantage of HSO over MLO, which provides no such guarantee. Once the system states are no longer exciting, the MLO could potentially become unstable.
    \end{remark}
    \begin{remark} The IRL-O formulation is not restricted to the choices of $K$ in Theorems \ref{Thm: State Exponential} and \ref{thm:2}. Different stabilizing or heuristic gain selection methods can  be incorporated in the developed framework. For example, motivated by robustness to measurement noise, the use of a Kalman filter for gain selection is explored in Section~\ref{sec: Sim}.
    \end{remark}
    
    \section{Simulations}\label{sec: Sim}
    A key motivation for casting the IRL problem into the observer framework is that the observer can be extended to a Kalman filter in a straightforward fashion to address measurement noise. To implement the developed observers as Kalman filters, all that is needed is to select the gains $K_3$ and $K_4$ using the Kalman gain update equations. The following simulation study demonstrates the validity, the robustness, and the performance of the designed observers and their Kalman filter implementation.
    
    While the developed observer IRL methods are applicable to general output feedback linear systems, the concurrent learning (CL) method used for comparison is only applicable to a restricted set of systems (the state estimator in \cite{SCC.Kamalapurkar2017} is modified slightly for the non-Brunovsky form of \eqref{eq: Sim Dyns}). In the following, to make comparisons feasible, a system that both methods are applicable to is selected.
    
    The agent under observation has linear dynamics \begin{equation}
        \dot{x}=\begin{bmatrix}2 & 1\\3 & 2\end{bmatrix}x+\begin{bmatrix}2\\0.5\end{bmatrix}u, \  y = \begin{bmatrix}1 & 0\end{bmatrix}x.\label{eq: Sim Dyns}
    \end{equation}
    The optimal controller, 
    $
        u^* (x)= -\begin{bmatrix}4.14 & 5.53\end{bmatrix}x,
    $
    minimizes an LQR problem, $Q=\text{diag}([2,11])$ and $R=1.5,$
    with an optimal value function
    $
        V^*(x)=2.54x_1^2+7.59x_2^2+4.50x_1x_2.
    $
    The ideal weights that are to be estimated are ${W}^*_{V1}=2.54,{W}^*_{V2}=7.59,{W}^*_{V3}=4.50,{W}^*_{Q1}=2,{W}^*_{Q2}=11,$ and $R=1.5$ is selected as the known value to remove the scaling ambiguity.
    
    Since the system state estimates converge exponentially to the true system states, a time based purging technique similar to \cite[Fig. 1]{SCC.Kamalapurkar2018} is utilized to reduce the estimation error associated with the system state estimates stored in the history stack. Furthermore, to improve numerical stability of gain computation, the history stack management algorithm also attempts to minimize the condition number of $\hat{\Sigma}^T\hat{\Sigma}$. In the presented simulation studies, the history stacks contain data for five previous time instances and are purged every $0.5$ seconds if they can be repopulated.
    
    Three simulation studies are performed. The first shows the performance of the designed observers in a noise-free setting for a system with two states and one control. The second simulation shows the designed observers in a noise-free setting for a larger dimensional system. The last simulation incorporates noise in order to investigate the robustness of the observer.
    
    The error metric used to compare all of the observers/filters is the summation of the five relative weight estimation errors, defined as

    {
		\medmuskip=0mu
		\thinmuskip=0mu
		\thickmuskip=0mu
		\begin{equation*}\label{eq: Metric}
		\sum\frac{\tilde{W}_i}{W^*_i}:=\frac{\Vert\tilde{W}_{V_1}\Vert}{W_{V_1}^*}+
		\frac{\Vert\tilde{W}_{V_2}\Vert}{W_{V_2}^*}+
		\frac{\Vert\tilde{W}_{V_3}\Vert}{W_{V_3}^*}+\frac{\Vert\tilde{W}_{Q_1}\Vert}{W_{Q_1}^*}+
		\frac{\Vert\tilde{W}_{Q_2}\Vert}{W_{Q_2}^*}.
		\end{equation*}
	}
 
    \subsection{PE Signal without Noise }\label{sec: PE no Noise}
    \subsubsection{Two State System}\label{sec: Two States}
    The first simulation study concerns a noise-free environment. The controller that the agent under observation implements is a combination of the optimal controller, $u^*$, and a known additive excitation signal, i.e., the feedback controller of the agent is
    $ u(t,x(t)) =u^*(x(t))+u_{exc}(t), $
    where $u_{exc}(t):=5\sin(t)+18\cos(0.4t)+36\sin(2t)+0.5\cos(3t)$ induces excitation in the signal $\hat{x}.$
    
    The HSO in \eqref{eq: Final observer}, is implemented using three different $K_{HSO}$ matrices, comprised of the same $K_3$ matrices, computed using the ``place" command in MATLAB for poles $p_1=-2$ and $p_2=-4$, and three different $K_4$ matrices. The first two $K_4$ matrices are computed using gains $K_4=-I$ and $K_4-0.5I$ (denoted in Fig. \ref{fig: NoNoise_PE} as HSO - P $= -1$ and HSO - P $= -0.5$, respectively). The third $K_4$ matrix is selected to be an exponentially varying gain matrix, $K_4=(1-0.9\exp^{-t})0.5I$ (denoted as HSO - Exp in Fig. \ref{fig: NoNoise_PE}). The MLO in \eqref{eq: Final PE Observer} is implemented using a single $K_{MLO}$ matrix, with $K_1$ computed using the ``place" command for poles $p_1=-2$ and $p_2=-4$, and $K_2=10000I.$
    
    As seen in Fig. \ref{fig: NoNoise_PE}, all of the weight estimation errors for the designed observers converge to the origin as expected. Even though there is a larger initial estimation error for the HSO, with constant gains, compared to the MLO, the history stack based observers converge much quicker than  the MLO. The initial estimation error can be reduced for the HSO either by moving the poles closer to the origin, or implementing an exponentially varying gain matrix, as in the HSO-Exp case. The exponentially varying gain matrix combines the benefits of initial small gains, when the state estimates are inaccurate, with those of progressively larger gains, leading to fast convergence.
    
    \begin{figure}
        \centering
        \includegraphics[clip,trim = 0.5cm 10cm 0.5cm 10cm, width=0.5\textwidth]{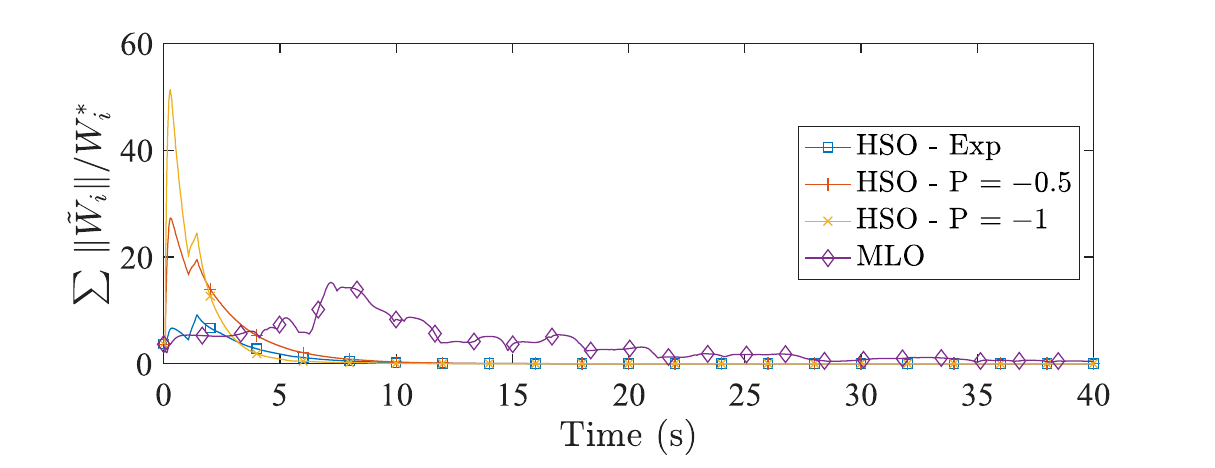}
        \caption{Weight estimation errors for the developed observers with no noise and PE signal.}
        \label{fig: NoNoise_PE}
    \end{figure}
    
    \subsubsection{Four State System}
    The second simulation shows a four state system with the exponentially varying HSO and the Kalman filter implementation of the HSO observer in a noise-free setting. Similarly to Section \ref{sec: Two States}, the agent under observation implements a combination of the optimal controller and a known exciting controller. In this simulation, the exciting controller is randomly selected from a uniform distribution in the set $[0,10]$. The dynamical system of the agent under observation is
    \begin{equation}
        \dot{x}=\begin{bmatrix}
            2&4&1&0\\0&3&6&2\\3&2&2&6\\3&5&6&2
        \end{bmatrix}x+\begin{bmatrix}
            7&2\\4&5\\3&3\\2&6
        \end{bmatrix}u, \ y = \begin{bmatrix}1 &0& 0& 0\\0&1& 0& 0\\0& 0& 0& 0\\0& 0& 0& 0\end{bmatrix}x.
    \end{equation}
    The optimal controller, \[u^*(x)=\begin{bmatrix}1.79&	-0.235&	1.022&	0.487\\
-0.345&	1.75&	3.96&	4.35\end{bmatrix}x,\]minimizes an LQR problem, $Q = \text{diag}([2,5,8,11]),R = \text{diag}([1.5,0.5])$, and $R(1,1) = 1.5$ is selected as the known weight.

As seen in Figure \ref{fig: NoNoise_PE4SS}, both the exponential and Kalman filter implementations of the HSO converge to the origin. In Figure \ref{fig: NoNoise_PE4SSPE}, the MLO eventually estimates the unknown weights in the reward and value function, however, the MLO takes significantly longer than the HSO implementations. This further validates the design for using previously recorded data to update the weight estimates.
    \begin{figure}
        \centering
        \includegraphics[width=0.5\textwidth]{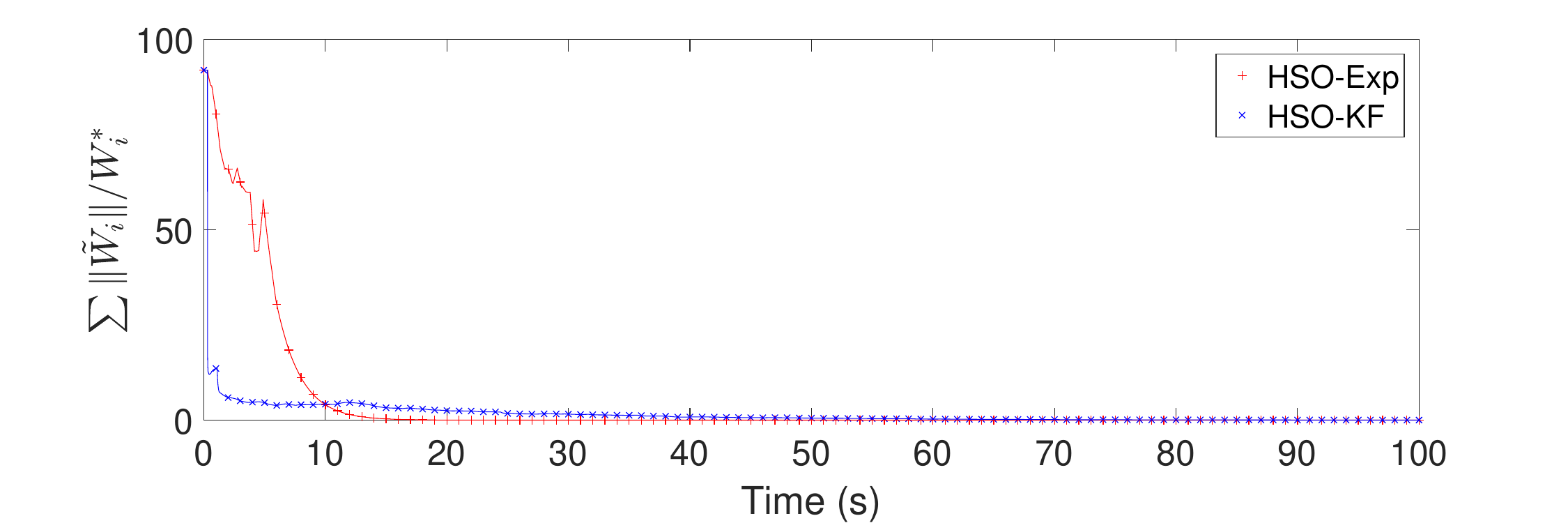}
        \caption{Weight estimation errors for the developed HSO observers with no noise and PE signal with larger dimensional system.}
        \label{fig: NoNoise_PE4SS}
    \end{figure}
    
    \begin{figure}
        \centering
        \includegraphics[width=0.5\textwidth]{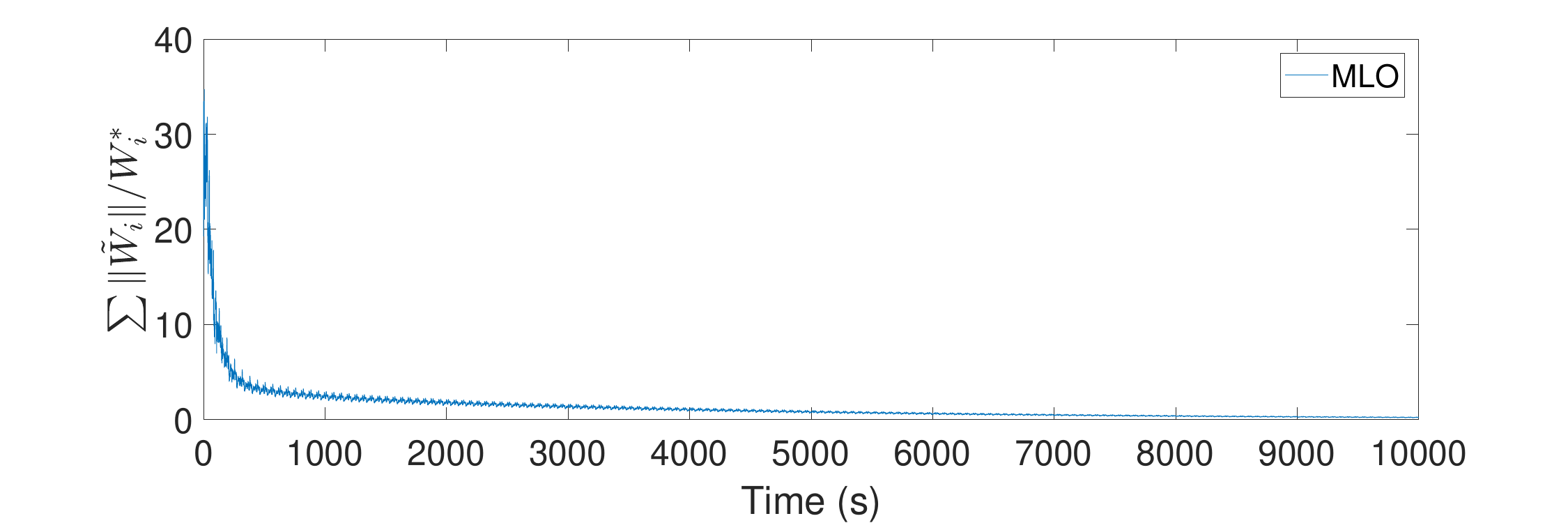}
        \caption{Weight estimation errors for the developed MLO observer with no noise and PE signal with larger dimensional system.}
        \label{fig: NoNoise_PE4SSPE}
    \end{figure}
    
    \subsection{Persistently Excited Signal with Noise}
    The last simulation is an investigation into noise robustness of the HSO and the Kalman filter implementation of the HSO (called HSO-KF) compared to the CL update law in \cite{SCC.Self.Abudia.ea2020,SCC.Self.Mahmud.ea2020}. The simulation for comparison is the same two state linear system as in Section \ref{sec: Two States}. The state estimator used for the CL method is developed in \cite{SCC.Kamalapurkar2017} (with a slight modification to address the non-Brunovsky form of the dynamics). Zero-mean Gaussian noise is added to $y'$ and $u$, with three noise variances used to simulate low-noise ($R_1 = \text{diag}([0.1^2, 0.1^2])$), medium noise ($R_2 = \text{diag}(0.5^2, 0.5^2])$ and high noise ($R_3 = \text{diag}([1^2, 1^2])$) scenarios. Fifty Monte-Carlo simulations for each noise level are conducted and compared with the no-noise case. We do not study the behavior of the MLO under noisy measurements due to the added robustness of the HSO due to the use of past data.
    
    The results of the simulation study are shown in Table \ref{Tab: Noise}. As seen from the data, all three methods perform well in the noise free case, and the performance of all three methods is comparable in the low noise scenario. The advantages of the two HSO methods over the CL method are evident in the medium and high noise scenarios. Both the HSO-Exp and HSO-KF show better robustness to noise when compared to the CL method, especially in high the noise situation (CL steady state (SS) error is almost four times higher than both HSO methods). Comparing the results of HSO-Exp to HSO-KF, HSO-Exp has lower SS errors for the low and medium noise cases, while, HSO-KF has lower SS errors for the no noise and high noise cases. In addition, HSO-KF converges quicker in every case compared to both CL and HSO-Exp, as evidenced by the average over the whole time interval (TT).
    
\begin{table}[h]
    \caption{Comparison between concurrent learning (CL), KF based implementation of HSO (HSO-KF), and exponential pole selection implementation of HSO (HSO-Exp), with different noise variances. Simulations were ran for 100 seconds over 50 trials with step size $Ts=0.005.$ The standard deviations (SD) simulated are $0.0,0.1,0.5,$ and $1.0$. The metric used for comparison is the average of the average on the trajectories $\sum \tilde{W}_i/W_i^*$, where TT denotes the average over the entire trajectory, and SS denotes the average over the last $30$ seconds of the trajectory. The exponential HSO gains are selected similar to Section \ref{sec: PE no Noise}, except $K_4=(1-0.9\exp^{-t})0.15I$. The Kalman filter gain is selected using the gain matrix $K_{HSO}=\text{diag}([K_3,K_4])$ where $K_3$ and $K_4$ are independent Kalman gains.}
    \label{Tab: Noise}
    \begin{center}
	    \begin{tabular}{|c|cc|cc|cc|}
	    \hline
	         & \multicolumn{2}{c|}{CL} & \multicolumn{2}{c|}{HSO-Exp}
	        & \multicolumn{2}{c|}{HSO-KF}\\\hline
	          SD & TT & SS&TT & SS & TT & SS\\
	          \hline
	           0.0 & 0.9855 &7.43e-05 &0.9101 &8.41e-05 &0.0446 &1.86e-14 \\
	           \hline
	            0.1 & 0.8977& 0.2647&0.8463 &0.1652 &0.2591 &0.2279 \\
	            \hline
	            0.5 & 2.1766& 2.0336& 1.3064& 0.6894& 0.7291&0.7277 \\
	             \hline
	            1.0 & 5.5415 & 5.5223&1.9055 &1.4667 &1.5055 &1.4111\\
	             \hline
	    \end{tabular}
	\end{center}
\end{table}
    
    \section{Conclusion}\label{sec: Con}
    This paper presents a novel observer-like formulation for performing online estimation of reward functions using input-output observations. Two observers are proposed and their convergence guarantees are established. The Monte-Carlo simulations demonstrate that the developed observer based IRL techniques, utilizing exponentially varying gains and Kalman gains, demonstrate better noise robustness than existing CL based IRL techniques \cite{SCC.Self.Abudia.ea2020,SCC.Self.Mahmud.ea2020}.
    
	Future work includes extension of the observer-based IRL methods to nonlinear systems and investigation of data storage algorithms for maintaining informative data in the history stack.

	\bibliographystyle{IEEEtran}
	\bibliography{mybib}

\begin{thebibliography}{10}
\def\url#1{}
\csname url@samestyle\endcsname
\providecommand{\newblock}{\relax}
\providecommand{\bibinfo}[2]{#2}
\providecommand{\BIBentrySTDinterwordspacing}{\spaceskip=0pt\relax}
\providecommand{\BIBentryALTinterwordstretchfactor}{4}
\providecommand{\BIBentryALTinterwordspacing}{\spaceskip=\fontdimen2\font plus
\BIBentryALTinterwordstretchfactor\fontdimen3\font minus
  \fontdimen4\font\relax}
\providecommand{\BIBforeignlanguage}[2]{{%
\expandafter\ifx\csname l@#1\endcsname\relax
\typeout{** WARNING: IEEEtran.bst: No hyphenation pattern has been}%
\typeout{** loaded for the language `#1'. Using the pattern for}%
\typeout{** the default language instead.}%
\else
\language=\csname l@#1\endcsname
\fi
#2}}
\providecommand{\BIBdecl}{\relax}
\BIBdecl

\bibitem{SCC.Russell1998}
S.~Russell, ``Learning agents for uncertain environments (extended abstract),''
  in \emph{Proceedings of the eleventh annual conference on Computational
  learning theory}, 1998.

\bibitem{SCC.Ng.Russell2000}
A.~Y. Ng and S.~Russell, ``Algorithms for inverse reinforcement learning,'' in
  \emph{Proc. Int. Conf. Mach. Learn.}\hskip 1em plus 0.5em minus 0.4em\relax
  Morgan Kaufmann, 2000, pp. 663--670.

\bibitem{SCC.Abbeel.Ng2010}
P.~Abbeel and A.~Y. Ng, ``Inverse reinforcement learning,'' in
  \emph{Encyclopedia of Machine Learning}, C.~Sammut and G.~I. Webb, Eds.\hskip
  1em plus 0.5em minus 0.4em\relax Springer, Boston, MA, 2010, pp. 554--558.

\bibitem{SCC.Kalman1964}
R.~E. Kalman, ``When is a linear control system optimal?'' \emph{J. Basic
  Eng.}, vol.~86, no.~1, pp. 51--60, 1964.

\bibitem{SCC.Schaal1997}
S.~Schaal, ``Learning from demonstration,'' in \emph{Advances in Neural
  Information Processing Systems 9}, M.~I. Jordan and T.~Petsche, Eds.\hskip
  1em plus 0.5em minus 0.4em\relax Cambridge, MA: MIT Press, 1997, pp.
  1040--1046.

\bibitem{SCC.Ziebart.Maas.ea2008}
B.~D. Ziebart, A.~Maas, J.~A. Bagnell, and A.~K. Dey, ``Maximum entropy inverse
  reinforcement learning,'' in \emph{Proc. AAAI Conf. Artif. Intel.}, 2008, pp.
  1433--1438.

\bibitem{SCC.Ratliff.Bagnell.ea2006}
N.~D. Ratliff, J.~A. Bagnell, and M.~A. Zinkevich, ``Maximum margin planning,''
  in \emph{Proc. Int. Conf. Mach. Learn.}, 2006.

\bibitem{SCC.Wulfmeier.Ondruska.ea2015}
M.~Wulfmeier, P.~Ondruska, and I.~Posner, ``Maximum entropy deep inverse
  reinforcement learning,'' {arXiv}:1507.04888, 2015.

\bibitem{SCC.Li.Kiseleva.ea2019}
Z.~Li, J.~Kiseleva, and M.~de~Rijke, ``Dialogue generation: From imitation
  learning to inverse reinforcement learning,'' in \emph{Proc. AAAI Conf.
  Artif. Intel.}, 2019, pp. 6722--6729.

\bibitem{SCC.Brown.Goo.ea2019}
D.~Brown, W.~Goo, P.~Nagarajan, and S.~Niekum, ``Extrapolating beyond
  suboptimal demonstrations via inverse reinforcement learning from
  observations,'' {arXiv}:1904.06387, 2019.

\bibitem{SCC.Levine.Popovic.ea2010}
S.~Levine, Z.~Popovic, and V.~Koltun, ``Feature construction for inverse
  reinforcement learning,'' in \emph{Advances in Neural Information Processing
  Systems 23}, J.~D. Lafferty, C.~K.~I. Williams, J.~Shawe-Taylor, R.~S. Zemel,
  and A.~Culotta, Eds.\hskip 1em plus 0.5em minus 0.4em\relax Curran
  Associates, Inc., 2010, pp. 1342--1350.

\bibitem{SCC.Levine.Popovic.ea2011}
------, ``Nonlinear inverse reinforcement learning with {G}aussian processes,''
  in \emph{Advances in Neural Information Processing Systems 24},
  J.~Shawe-Taylor, R.~S. Zemel, P.~L. Bartlett, F.~Pereira, and K.~Q.
  Weinberger, Eds.\hskip 1em plus 0.5em minus 0.4em\relax Curran Associates,
  Inc., 2011, pp. 19--27.

\bibitem{SCC.Sosic.KhudaBukhsh.ea2017}
A.~{\v{S}}o{\v{s}}i{\'c}, W.~R. KhudaBukhsh, A.~M. Zoubir, and H.~Koeppl,
  ``Inverse reinforcement learning in swarm systems,'' in \emph{Proc. Conf.
  Auton. Agents MultiAgent Syst.}\hskip 1em plus 0.5em minus 0.4em\relax
  International Foundation for Autonomous Agents and Multiagent Systems, 2017,
  pp. 1413--1421.

\bibitem{SCC.Wang.Klabjan2018a}
X.~Wang and D.~Klabjan, ``Competitive multi-agent inverse reinforcement
  learning with sub-optimal demonstrations,'' in \emph{Proc. Int. Conf. Mach.
  Learn.}, ser. Proceedings of Machine Learning Research, J.~Dy and A.~Krause,
  Eds., vol.~80.\hskip 1em plus 0.5em minus 0.4em\relax PMLR, 2018, pp.
  5143--5151.

\bibitem{SCC.Michini.How2012}
B.~Michini and J.~P. How, ``Bayesian nonparametric inverse reinforcement
  learning,'' in \emph{Machine Learning and Knowledge Discovery in Databases},
  ser. Lecture Notes in Computer Science, P.~A. Flach, T.~D. Bie, and
  N.~Cristianini, Eds.\hskip 1em plus 0.5em minus 0.4em\relax Springer Berlin
  Heidelberg, 2012, vol. 7524, pp. 148--163.

\bibitem{SCC.Vamvoudakis.Lewis2010}
K.~G. Vamvoudakis and F.~L. Lewis, ``{Online actor-critic algorithm to solve
  the continuous-time infinite horizon optimal control problem},''
  \emph{Automatica}, vol.~46, no.~5, pp. 878--888, 2010.

\bibitem{SCC.Wang.Liu.ea2016}
D.~Wang, D.~Liu, H.~Li, B.~Luo, and H.~Ma, ``An approximate optimal control
  approach for robust stabilization of a class of discrete-time nonlinear
  systems with uncertainties,'' \emph{IEEE Trans. Syst. Man Cybern. Syst.},
  vol.~46, no.~5, pp. 713--717, 2016.

\bibitem{SCC.Kamalapurkar.Walters.ea2018}
R.~Kamalapurkar, P.~Walters, J.~A. Rosenfeld, and W.~E. Dixon,
  \emph{Reinforcement learning for optimal feedback control: {A}
  {L}yapunov-based approach}, ser. Communications and Control
  Engineering.\hskip 1em plus 0.5em minus 0.4em\relax Springer International
  Publishing, 2018.

\bibitem{SCC.Molloy.Ford.ea2018}
T.~Molloy, J.~Ford, and T.~Perez, ``Online inverse optimal control on infinite
  horizons,'' in \emph{IEEE Conf. Decis. Control}.\hskip 1em plus 0.5em minus
  0.4em\relax IEEE, 2018, pp. 1663--1668.

\bibitem{SCC.Kamalapurkar2018}
R.~Kamalapurkar, ``Linear inverse reinforcement learning in continuous time and
  space,'' in \emph{Proc. Am. Control Conf.}, 2018, pp. 1683--1688.

\bibitem{SCC.Self.Harlan.ea2019a}
R.~V. Self, M.~Harlan, and R.~Kamalapurkar, ``Online inverse reinforcement
  learning for nonlinear systems,'' in \emph{Proc. IEEE Conf. Control Technol.
  Appl.}\hskip 1em plus 0.5em minus 0.4em\relax IEEE, 2019, pp. 296--301.

\bibitem{SCC.Self.Mahmud.ea2020}
R.~V. Self, S.~M.~N. Mahmud, K.~Hareland, and R.~Kamalapurkar, ``Online inverse
  reinforcement learning with limited data,'' in \emph{Proc. IEEE Conf. Decis.
  Control}, to appear, see {arXiv:2008.08972}.

\bibitem{SCC.Self.Abudia.ea2020}
R.~V. Self, M.~Abudia, and R.~Kamalapurkar, ``Online inverse reinforcement
  learning for systems with disturbances,'' in \emph{Proc. Am. Control Conf.},
  to appear, see {arXiv:2003.03912}.

\bibitem{SCC.Rotithor.Trombetta.ea2019}
G.~Rotithor, D.~Trombetta, R.~Kamalapurkar, and A.~P. Dani, ``Reduced order
  observer for structure from motion using concurrent learning,'' in
  \emph{Proc. IEEE Conf. Decis. Control}, 2019, pp. 6815--6820.

\bibitem{SCC.Liberzon2012}
D.~Liberzon, \emph{Calculus of variations and optimal control theory: a concise
  introduction}.\hskip 1em plus 0.5em minus 0.4em\relax Princeton University
  Press, 2012.

\bibitem{SCC.Sastry.Bodson1989}
S.~S. Sastry and M.~Bodson, \emph{Adaptive control: stability, convergence, and
  robustness}.\hskip 1em plus 0.5em minus 0.4em\relax Upper Saddle River, NJ:
  Prentice-Hall, 1989.

\bibitem{SCC.Khalil2002}
H.~K. Khalil, \emph{Nonlinear systems}, 3rd~ed.\hskip 1em plus 0.5em minus
  0.4em\relax Upper Saddle River, NJ: Prentice Hall, 2002.

\bibitem{SCC.Horn.Johnson1993}
R.~A. Horn and C.~R. Johnson, \emph{Matrix analysis}.\hskip 1em plus 0.5em
  minus 0.4em\relax Cambridge: Cambridge University Press., 1993.

\bibitem{SCC.Kamalapurkar2017}
R.~Kamalapurkar, ``Online output-feedback parameter and state estimation for
  second order linear systems,'' in \emph{Proc. Am. Control Conf.}, 2017, pp.
  5672--5677.

\end{thebibliography}

    \end{document}